\newcommand{\minwise}{MinWise\xspace}
\newcommand{\oph}{OPH\xspace}
\newcommand{\goph}{GOPH\xspace}
\newcommand{\hoph}{HOPH\xspace}
\newcommand{\abhoph}{(a:b)HOPH\xspace}
\newcommand{\oohoph}{(1:1)HOPH\xspace}
\newcommand{\othoph}{(1:2)HOPH\xspace}
\newcommand{\tohoph}{(2:1)HOPH\xspace}
\begin{document}

\title{Hierarchical One Permutation Hashing: Efficient Multimedia Near Duplicate Detection
}


\author{Chengyuan Zhang$^\dagger$ \and Yunwu Lin$^\dagger$ \and Lei Zhu$^\dagger$ \and XinPan Yuan$^\ddagger$ \and Jun Long$^\dagger$         \and Fang Huang$^\dagger$
}


\institute{Chengyuan Zhang \at
              \email{cyzhang@csu.edu.cn}           
           \and
           Yunwu Lin \at
              \email{lywcsu@csu.edu.cn}
           \and
           Lei Zhu \at
              \email{leizhu@csu.edu.cn}           
           \and
           \Letter XinPan Yuan \at
              \email{xpyuan@hut.edu.cn}
              \and
           \Letter Jun Long \at
              \email{jlong@csu.edu.cn}           
           \and
           Fang Huang \at
              \email{hfang@csu.edu.cn}
              \and
           $^\dagger$ School of Information Science, Central South University, PR China\\
$^{\ddagger}$ School of Computer, Hunan University of Technology, China\\
}

\date{Received: date / Accepted: date}

\maketitle

\begin{abstract}
With advances in multimedia technologies and the proliferation of smart phone, digital cameras, storage devices, there are a rapidly growing massive amount of multimedia data collected in many applications such as multimedia retrieval and management system, in which the data element is composed of text,
image, video and audio. Consequently, the study of multimedia near duplicate detection has attracted significant concern from research organizations and commercial communities. Traditional solution minwish hashing (\minwise) faces two challenges: expensive preprocessing time and lower comparison speed. Thus, this work first introduce a hashing method called one permutation hashing (\oph) to shun the costly preprocessing time. Based on \oph, a more efficient strategy group based one permutation hashing (\goph) is developed to deal with the high comparison time. Based on the fact that the similarity of most multimedia data is not very high, this work design an new hashing method namely hierarchical one permutation hashing (\hoph) to further improve the performance. Comprehensive experiments on real multimedia datasets clearly show that with similar accuracy \hoph is five to seven times faster than \minwise .

\keywords{Hierarchical one permutation hashing \and Multimedia data \and Near duplicate detection}

\end{abstract}

\section{Introduction}
\label{intro}
Due to the proliferation of smart phone, video cameras, storage devices as well as the development of social networks, large volumes of multimedia data associating to text, image, video and audio have been generated by users every day. For instance, it is reported that there are over 95 million geo-tagged photos on Flickr with a daily growth rate of around 500,000 new geo-tagged photos. As a result, in recent years variety of multimedia information retrieval applications have emerged and become more and more popular.

As a vital component of multimedia information retrieval applications, near duplicate detection, also known as similarity search, is a well established research topic and still attracts lots of attention. An element is called a near-duplicate of a reference element if it is ¡°close¡±, according to some defined measure, to the reference element. For example, given an image and a collection, we want to find those candidates in the collections which are most similar to the input image based on their Jaccard similarity.

As efficient solution of multimedia near duplicate detection, \minwise approach has been paid substantial effort in recent research literatures. For example, in text near duplicate detection, $k$ independent random permutations are generated to convert the words of each document into a set of fingerprints to accelerate similarity search; similarly, in image near duplicate detection, $k$ independent random permutations are applied to convert the visual words of each image into a set of fingerprints.

Though \minwise method are successful in large scale similar multimedia searching, it still faced two critical challenges. Firstly, the preprocessing cost of \minwise can be very expensive. To ensure the accuracy of \minwise, hundreds of independent random permutations are generated. Secondly, \minwise is time consuming, because it has to compare all fingerprints to calculate the similarity.

In order to reduce preprocessing time of \minwise, we first introduce a hashing method \oph, which is widely used in text search area, to multimedia near duplicate detection. Different with \minwise, which requires hundreds of random permutation, \oph only require single permutation, which significantly reduces the time consumption during random permutation. Based on \oph, we further divide the fingerprints into several groups, and bring in the concept of small probability event. With the assistance of small probability event, our proposed G\oph can terminal the comparison earlier, but with similar accuracy. Meanwhile, we observe that most of the data are not similar to the input in real application. Thus, based on this observation, we purpose a new hashing method namely \hoph to further reduce the comparison time.

\textbf{Contributions.} The principle contributions of this paper are summarized as follows.
\begin{itemize}
\item \oph is introduced to avoid the expensive preprocessing step of \minwise. Based on \oph, an efficient group based algorithm called \goph is develop to reduce the comparison time.
\item To further accelerate the search speed, we proposed a new hashing method \hoph.
\item Comprehensive experiments on real and synthetic text and image datasets demonstrate that our proposed hashing method \hoph achieve substantial improvements over \minwise, \oph, but with similar accuracy.
\end{itemize}

\textbf{Roadmap.} The rest of the paper is organized as follows. Section ~\ref{relwork} formally defines the problem of multimedia near duplicate detection, followed by the introduction of the related work. Section ~\ref{sec:one hashing} describes \oph, \goph and \goph's image application. Section ~\ref{sec:hierarchical one hash} presents \hoph, some important theoretical analysis of \hoph and its image application. Extensive experiments are depicted in Section ~\ref{perform}. Finally, Section ~\ref{con} concludes the paper.

\section{Related Work}
\label{relwork}

In this subsection, we present some existing techniques, such as \minwise, $b$-bit \minwise, for the problem of text near-duplicate detection. Then we also present some general techniques related to image similarity computation.

\subsection{Near-duplicate detection for text}
Near-duplicate detection is one of the key problems in the area of database and information retrieval. It aims to detect groups of documents with almost the same contents among a document collection. Two documents with a great amount of shared attributes do not necessarily count as near-duplicate. For the near-duplicate detection problem, Yang et al.\cite{DBLP:conf/sigir/YangC06} proposed an instance-level constrained clustering approach. Their framework incorporates information such as document attributes and content structure into the clustering process to form near-duplicate clusters. Yang et al.~\cite{DBLP:journals/jasis/HoadZ03,YangINF2013} emphasized that the gap between rare words¡¯ term frequency in two documents should be smaller than that between common words¡¯ and their best ranking is giving by a term weighting function biased towards rare terms. Hassanian-esfahani et al.~\cite{DBLP:journals/eswa/Hassanian-esfahani18} proposed a Sectional \minwise (S-\minwise) for the detection of near-duplicate documents to enhances the \minwise data structure with information about the location of the attributes in the document.

\minwise~\cite{DBLP:journals/jcss/BroderCFM00,LINYANG13,DBLP:journals/siamdm/Vsemirnov04} is a locality sensitive hashing for the Jaccard similarity, which is a most popular technique for efficiently text similarity computing. It has a wide range of applications such as duplicate detection~\cite{DBLP:conf/sigir/Henzinger06}, nearest neighbor search~\cite{DBLP:conf/stoc/IndykM98}, large-scale learning~\cite{DBLP:conf/nips/LiSMK11}, all-pairs similarity~\cite{DBLP:conf/www/BayardoMS07}, etc.. Border et al.~\cite{DBLP:journals/cn/BroderGMZ97} invented the \minwise algorithm for near-duplicate web page detection and clustering. Li et al.~\cite{DBLP:conf/nips/LiSMK11} presented a simple effective solution to large-scale learning in massive and extremely high-dimensional datasets and indicated that $b$-bit \minwise is significantly more accurate than Vowpal Wabbit in binary data. The major defect of \minwise algorithm and $b$-bit \minwise is that they require an expensive preprocessing step, by conducting $k$ permutations on the entire dataset~\cite{Li2012One}. Pagh et al.~\cite{DBLP:conf/pods/PaghSW14} studied and addressed the question: How many bits is it necessary to allocate to each summary in order to get an estimate with $1 ¡À ¦Å$ relative error.

\subsection{Similarity computation for image}
Similarity computation~\cite{DBLP:conf/cikm/WangLZ13,DBLP:conf/mm/WangLWZZ14} for image is another important technique which is focused by a great many of researchers. The first significant problem concerning efficient similarity measure is image~\cite{DBLP:conf/ijcai/WangZWLFP16} representations. Scale Invariant Feature Transform (SIFT for short)~\cite{DBLP:journals/ijcv/Lowe04} proposed by Lowe is a classical approach in image recognition and computer vision area. It aims to detects and describes local features in images~\cite{DBLP:conf/iccv/Lowe99}. The SIFT descriptor is invariant to uniform scaling, orientation, illumination changes and partially invariant to affine distortion. Many studies of image processing~\cite{DBLP:conf/pakdd/WangLZW14,NNLS2018} and retrieval~\cite{DBLP:conf/mm/WangLWZ15} use it as one of the basic methods. In order to identify and remove the most false positive matches, Zhou et al.~\cite{DBLP:conf/icimcs/ZhouLWLT12} proposed to generate binary SIFT descriptor in a given pair of the images from the original SIFT descriptor. Zhou et al.~\cite{DBLP:conf/icip/ZhouLZ16} extended SIFT-based match kernels by integrating the match functions for SIFT and CNN features~\cite{DBLP:journals/cviu/WuWGHL18,DBLP:journals/pr/WuWGL18,TC2018,DBLP:journals/pr/WuWLG18}. In order to improve the performance of image retrieval, they proposed a threshold exponential match kernel for CNN features to filter out the images whose the semantic similarity is lower than the threshold. To obtain enhanced performance, Zhang et al. ~\cite{DBLP:journals/pami/ZhangYWLT15} utilized 1000 semantic attributes to revise the vocabulary tree of SIFT descriptors in Bag-of-Word model (BoW for short), which stores only occurrence counts of vector quantized features~\cite{DBLP:conf/icimcs/QuSYL13}. The Bag-of-Visual-Word model (BoVW for short) represent a image by a set of visual words applying SIFT descriptor and K-means clustering algorithm~\cite{DBLP:conf/cvpr/NisterS06,DBLP:conf/cvpr/PhilbinCISZ07}. BoVW with $tf-idf$ weighting~\cite{DBLP:conf/iccv/SivicZ03} has proven to be a very successful approach for image and particular object retrieval. Nister et al. proposed a recognition scheme by applying this modal, which scales efficiently to a large number of objects. There are two types of searching methods based on BoVW modal, namely the exact inverted file methods~\cite{DBLP:journals/csur/ZobelM06} and the hashing methods~\cite{DBLP:conf/civr/ChumPIZ07}. Hashing methods~\cite{DBLP:conf/sigir/WangLWZZ15,DBLP:journals/ivc/WuW17} are used to solve the problem of image similarity measure~\cite{DBLP:journals/corr/abs-1708-02288} and multimedia retrieval~\cite{DBLP:journals/tip/WangLWZ17,DBLP:journals/tip/WangLWZZH15,DBLP:journals/tnn/WangZWLZ17}, which are concerned by the community. The spectral hashing method proposed by Shao et al.~\cite{DBLP:journals/prl/ShaoWOZ12} and the local sensitive hashing method designed by are belong to the unsupervised hashing methods. Jain et al.~\cite{DBLP:conf/cvpr/JainKG08} a method that applied a Mahalanobis distance function that captures the imagespsila underlying relationships well. This approach combined the \minwise method with distance metric learning. Li et al.~\cite{DBLP:journals/tmm/LiWCXL13} presented a method to directly optimize the graph Laplacian by using spectral hashing combined with a distance learning.

For problem of detection of near duplicate images, Chum et al.~\cite{DBLP:conf/bmvc/ChumPZ08} proposed an efficient way based on a \minwise method, which uses a visual vocabulary of vector quantized local feature descriptors and for retrieval exploits enhanced \minwise techniques. Zhang et al.~\cite{DBLP:conf/mm/ZhangC04} applied a parts-based representation of each scene by building Attributed Relational Graphs (ARG) between interest points. Based on Stochastic Attributed Relational Graph Matching, they compared the similarity of two images. Torralba et al.~\cite{DBLP:conf/cvpr/TorralbaFW08} proposed a method to learn short descriptors to retrieve similar images from a huge database, which is based on a dense 128D global image descriptor. Jain et al.~\cite{DBLP:conf/cvpr/JainKG08} introduced a method for efficient extension of Locally Sensitive Hashing scheme for Mahalanobis distance. Apparently, both above approaches use bit strings as a fingerprint of the image. Philbin et al.~\cite{DBLP:conf/cvpr/PhilbinCISZ07} proposed a large-scale object retrieval system and compared different scalable methods for building a vocabulary. Besides, they introduced a novel quantization method based on randomized trees to enhance the performance of image-feature vocabularies construction.
\section{Basic Apporach}
\label{sec:one hashing}
In this section, we first formally define the problem of
multimedia near duplicate detection, and then review the hashing method \oph, which is widely used in text search area, in section ~\ref{sec:one hashing review}. We propose an advanced hashing method called \goph to improve the performance of \oph in section ~\ref{sec:GOPH}. Section ~\ref{sec:image detection} presents the image application of \goph. Table~\ref{tab:nnd notation} below summarizes the mathematical notations
used throughout this paper.

\begin{table}
	\centering
    \small
	\begin{tabular}{|p{0.27\columnwidth}| p{0.62\columnwidth} |}
		\hline
		\textbf{Notation} & \textbf{Definition} \\ \hline\hline
		~$q$             & a multimedia data (query)                                \\ \hline
        ~$\mathcal{D}$   & the multimedia dataset                                \\ \hline
	  	~$\mathcal{T}$   & the similarity threshold                                \\ \hline
	  	~$\psi$   & a random permutation                                \\ \hline
	  	~$\mathcal{V}$   & the whole vocabulary space                                \\ \hline
	  	~$\mathcal{N}_mb$   & the number of \textquotedblleft matched bins\textquotedblright                                \\ \hline
	  	~$\mathcal{N}_eb$   & the number of \textquotedblleft empty bins\textquotedblright                               \\ \hline
	  	$\hat{\mathcal{R}}_{mb}$   & unbiased estimator                                \\ \hline
	  	$var(\hat{\mathcal{R}}_{mb})$   & variance of \oph                               \\ \hline
	  	$\mathcal{K}$   &  the bin number of the comparison part                               \\ \hline
	  	$\mathcal{X}$   &  the number of times that the fingerprints are equal     \\ \hline
        $T$   &  the number of match bin after $\mathcal{K}$ comparisons     \\ \hline
        $\epsilon$  &  the error tolerance      \\ \hline
        $\mathcal{I}$   & the image dataset                                \\ \hline
        $\mathcal{N}_{mb_{h}}$ & The number of matched bin of \hoph                                \\ \hline
        $\mathcal{N}_{eb_{h}}$ & The number of empty bin of \hoph                                \\ \hline
        $\hat{\mathcal{R}}_{mb_{h}}$ & The unbiased estimator  of \hoph                                \\ \hline
        $\otimes$ & the empty bin of \oph and \hoph                                \\ \hline
	\end{tabular}
    \vspace{2mm}
    \caption{Notations} \label{tab:nnd notation}	
    \vspace{-6mm}
\end{table}

\subsection{Problem Definition}
\label{prob def}
\textbf{Near Duplicate.} Given a multimedia data $q \in \mathcal{D}$, any multimedia data $p \in \mathcal{D}$ such that $sim(q, p) < \mathcal{T}$ is a near duplicate
of $q$, where $\mathcal{T}$ is a similarity threshold. Among the available
similarity comparison function \textbf{sim} which might be exploited, the Jaccard similarity
has been chosed in this work, since it has been widely used in different applications.
Without loss of generality, this paper mainly consider two types multimedia data, text and
image.

\subsection{One Permutation Hashing Review}
\label{sec:one hashing review}
To reduce the times of random permutation of \minwise, Ping Li \emph{el al.} propose an signature named one permutation hashing  in ~\cite{Li2012One}. To generate hundreds of samples, traditional signature \minwise requires hundreds of random permutation. However, only single permutation is enough for \oph, which significantly reduces the time consumption during random permutation.

\textbf{One Permutation Hashing.} First, a random permutation $\psi$ is generated.
For each document $\mathcal{D}_i$ a one permutation hashing min $\psi(\mathcal{D}_i)$ is recorded. Consider $\mathcal{D}_1$, $\mathcal{D}_2$, $\mathcal{D}_3$ $\subseteq$ $\mathcal{V}$ = \{0, 1, ..., 16\}. Assume $\mathcal{D}_1$ = \{1, 2, 5, 10, 12, 15\}, $\mathcal{D}_2$ = \{1, 2, 6, 10, 12, 14\}, $\mathcal{D}_3$ = \{2, 9, 10, 12, 14\}, we apply the permutation $\psi$ on the three sets and present the corresponding $\psi(\mathcal{D}_1)$, $\psi(\mathcal{D}_2)$, $\psi(\mathcal{D}_3)$ as binary (0/1) vector as what is shown in Fig. \ref{fig:oph_example}.


\begin{figure}[thb]
\newskip\subfigtoppskip \subfigtopskip = -0.1cm
\centering
\includegraphics[width=.90\linewidth]{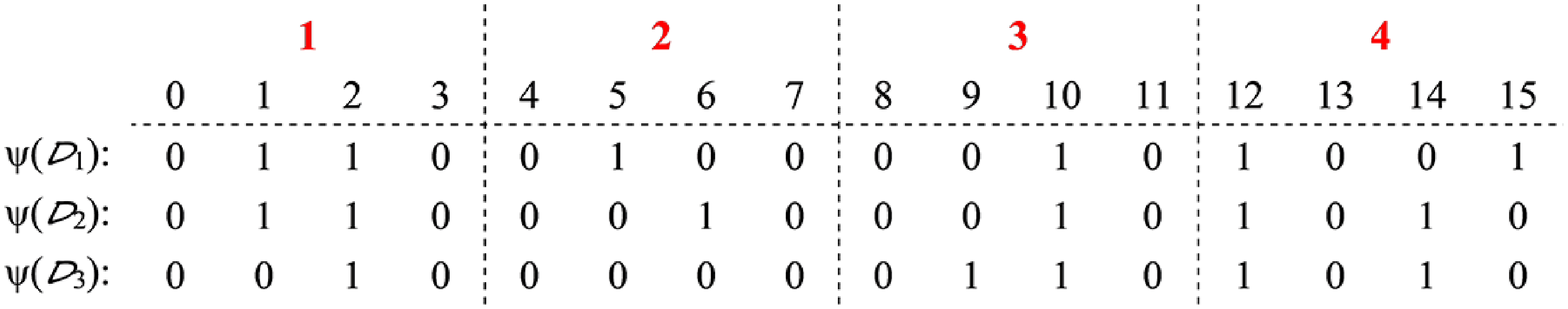}
\vspace{-1mm}
\caption{\small One Permutation Hashing Example }
\label{fig:oph_example}
\end{figure}



Then, we divide the whole space $\mathcal{V}$ into $k$ bins, and select the first non-zero element as a sample for each bin. In some special case, if there is no non-zero element in the bin, $\otimes$ is used to represent the empty bin. By this way, a random permutation of \oph can generate $k$ sample through $k$ bins. Assume $k$=4, the sample selected from  $\psi(\mathcal{D}_1)$ is [1, 5, 10, 12], $\psi(\mathcal{D}_2)$ is [1, 6, 10, 12], and $\psi(\mathcal{D}_3)$ is [2, $\otimes$, 9, 12]. Because we only need to compare the sample within the same bin, we can use the smallest possible representation to represent the actual value. For example, for $\psi(\mathcal{D}_1)$, the final representation is [1, 1, 2, 0]; for $\psi(\mathcal{D}_2)$, the final representation is [1, 2, 2, 0]; similarly, for $\psi(\mathcal{D}_3)$, the final representation is [2, $\otimes$, 1, 0].

From the above example, the sets $\psi(\mathcal{D}_1)$ and $\psi(\mathcal{D}_2)$ have 3 identical smallest possible representations and the estimated similarity will be 0.75, while the exact similarity is 0.5. The sets $\psi(\mathcal{D}_1)$ and $\psi(\mathcal{D}_3)$ share one smallest possible representation and their similarity estimate is 0.333 (0.375 is exact).

Finally, we introduce some important properties of \oph. Without loss of generality, we consider two documents $\mathcal{D}_1$ and $\mathcal{D}_2$. Firstly, it is the two fundamental definition of \oph $\mathcal{N}_{mb}$ and $\mathcal{N}_{eb}$. $\mathcal{N}_{mb}$ and and $\mathcal{N}_{eb}$ represents the number of \textquotedblleft matched bins\textquotedblright and the number of \textquotedblleft empty bins\textquotedblright respectively:

\begin{equation}
\mathcal{N}_{mb}=\sum_{j=1}^{k}\mathcal{B}_{mb,i}
\end{equation}

\begin{equation}
\mathcal{N}_{eb}=\sum_{j=1}^{k}\mathcal{B}_{eb,i}
\end{equation}

where $\mathcal{B}_{mb,i}$ and $\mathcal{B}_{eb,i}$ are defined for the $i$-th bin, as

\begin{equation}
\mathcal{B}_{mb,i}=\begin{cases}
            1 \ if \ min(\psi(\mathcal{D}_1))=min(\psi(\mathcal{D}_2)) \& \psi(\mathcal{D}_1) \neq \otimes \& \psi(\mathcal{D}_1) \neq \otimes \ in \\ \ \ the \ i-th \ bin\\
            0 \ otherwise
           \end{cases}
\end{equation}

\begin{equation}
\mathcal{B}_{eb,i}=\begin{cases}
            1 \ if \ \psi(\mathcal{D}_1)=\psi(\mathcal{D}_2) = \otimes \ in \ the \ i-th \ bin\\
            0 \ otherwise
           \end{cases}
\end{equation}

Denote $\psi$ a random permutation: $\psi :$ $\mathcal{V}$ $\to$ $\mathcal{V}$. The hashed values are the two smallest possible representation sets after applying the permutation $\psi$ on $\psi(\mathcal{D}_1)$ and $\psi(\mathcal{D}_2)$. The probability at which the two hashed values are equal is

\begin{equation}
\mathcal{R}=\textbf{Pr}(min(\psi(\mathcal{D}_1))= min(\psi(\mathcal{D}_2))) = \dfrac{\mathcal{D}_1 \cap \mathcal{D}_2}{\mathcal{D}_1 \cup \mathcal{D}_2}=Jaccard(\mathcal{D}_1,\mathcal{D}_2)
\end{equation}

Then the unbiased estimator of one permutation hashing is

\begin{equation}
\label{eqn:oph est}
\hat{\mathcal{R}}_{mb}=\dfrac{\mathcal{N}_{mb}}{k-\mathcal{N}_{eb}}\\
\end{equation}

\begin{equation}
E(\hat{\mathcal{R}}_{mb})=\mathcal{R}
\end{equation}
Based on unbiased estimator, assume $g$ = $|\mathcal{D}_1 \cup \mathcal{D}_2|$ the variance of \oph is
\begin{equation}
var(\hat{\mathcal{R}}_{mb})=\mathcal{R}(1-\mathcal{R})(E(\dfrac{1}{k-\mathcal{N}_{eb}})(1+\dfrac{1}{g-1})-\dfrac{1}{g-1})
\end{equation}

\subsection{Group Based One Permutation Hashing}
\label{sec:GOPH}
\textbf{Motivation.} While calculating the similarity, we have to compare the fingerprints one by one in one permutation hashing, which is time consuming. Meanwhile, one permutation hashing is claimed to satisfy the binomial distribution in ~\cite{Li2012One}. Thus, based on this property, we design a new algorithm named group based one permutation hashing(\goph) to reduce the comparison time.

\textbf{Basic Idea.} After applying one permutation to documents, we first aggregate the generated fingerprints into $n$-groups. Then, from the first group to the last group, we progressively compute the similarity between the corresponding groups, and bring in the concept of small probability event as a filter to accelerate the comparison for the remain groups.




Assume $\mathcal{K}$ is the bin number of the comparison part, $\mathcal{X}$ is defined as the number of times that the fingerprints are equal in comparison part,  denotes as
\begin{equation}
\mathcal{X}=\sum_{j=0}^{\mathcal{K}}1\{min(\psi(\mathcal{D}_i))= min(\psi(\mathcal{D}_i))\}
\end{equation}

Assume $T$ is the estimator of the number of match bin after $\mathcal{K}$ comparisons.
Because there are only two situations: "match" or "unmatch" in each comparison, these two situations are opposite each other and independent from each other, and the comparison result is not related to the results of other comparisons.
Obviously, $\mathcal{X}$ satisfies the binomial distribution, denotes as $\mathcal{X} \sim F($T$,\mathcal{K})$. Thus, The distribution function $\mathcal{F}(x)$ of the variable $\mathcal{X}$ is denoted as follow:


\begin{equation}
\mathcal{F}(x)=\begin{cases}
            \sum_{j=0}^{\mathcal{X}}\tbinom{\mathcal{K}}{j}{T}^{j}(1-{T})^{\mathcal{K}-j} \ \ \ \ ,  x < \mathcal{X}\\
            \sum_{j=\mathcal{X}+1}^{\mathcal{K}}\tbinom{\mathcal{K}}{j}{T}^{j}(1-{T})^{\mathcal{K}-j}, x \geq \mathcal{X}
           \end{cases}
\end{equation}

In the following, we first introduce the concept of small probability event, then introduce how the small probability event is used to avoid unnecessary comparison for the remain groups.

\begin{definition}[\textbf{Small Probability Event}] \label{def:low bound}
Given an error tolerance $\epsilon$, $\mathcal{F}(x)$ is the distribution function of variable $\mathcal{X}$, and $\mathcal{X}\sim F( T ,\mathcal{K})$, an event is called a small probability event, if and only if $\mathcal{F}(x) \leq \epsilon$.
\end{definition}

Assume $\mathcal{M}_a$ indicates the excepted average match bins for each groups, and $\mathcal{M}_{r_a}$ represents the minimum average of the remainder groups, if the final  average match bins not less than $\mathcal{M}_a$ according to the current binomial distribution. Assume the event $E$ indicates the probability that $\mathcal{M}_{r_a}$ can eventually reach $\mathcal{M}_a$, if the probability of event $E$ is less than the error tolerance $\epsilon$, we can see event $E$ as a small probability event, and terminate the subsequent calculation after corresponding processing. More specifically, while comparing the value of $\mathcal{M}_a$ and $\mathcal{M}_{r_a}$, we have the following two cases:

i) If $\mathcal{M}_{r_a} < \mathcal{M}_a$, for the remain groups, we compute whether $\mathcal{F}(\mathcal{M}_{r_a})$ is smaller than $\epsilon$. If $\mathcal{F}(\mathcal{M}_{r_a})$ is smaller than $\epsilon$,  we consider that it is almost impossible for these two documents to meet the similarity requirement, label them as dissimilarity documents, and terminal the algorithm; otherwise, we takes next group into further consideration.

ii) If $\mathcal{M}_{r_a} \geq \mathcal{M}_a$, for the remain groups, we still compare the value of $\mathcal{F}(\mathcal{M}_{r_a})$ and  $\epsilon$. If $\mathcal{F}(\mathcal{M}_{r_a})$ is smaller than $\epsilon$,  we consider that these two documents should be a similar document pair, add them to result set, and  terminal the algorithm; otherwise, we takes next group into further consideration.

\textbf{Algorithm.} Algorithm \ref{alg:dpc} illustrates the implementation details of the
document pair comparison based on \goph. In Line \ref{alg:dpc init},we first initialize current group number $n_c$ to 1, current match bins $\mathcal{M}_c$ to 0, and calculate $\mathcal{M}_a$ by input value $n$, $k'$, $\mathcal{T}$. Then, for each group of document pair $(\mathcal{D}_i, \mathcal{D}_j)$, we gradually calculate their similarity. From Line
\ref{alg:cal cur info st} to Line \ref{alg:cal cur info end}, we update the $\mathcal{M}_c$ based on current group's information. After computing the value of $\mathcal{M}_{r_a}$ in Line \ref{alg:cal mra}, we compare its value with $\mathcal{M}_a$ . From Line \ref{alg:comp small st} to Line \ref{alg:comp small end}, if $\mathcal{M}_{r_a} < \mathcal{M}_a$, we further compare $F(\mathcal{M}_{r_a})$ with $\epsilon$, if its value is larger than $\epsilon$, we continue to compare the similarity of next group; Otherwise, we consider current document pair is a similar pair, and break the algorithm. From Line \ref{alg:comp large st} to Line \ref{alg:comp large end}, If $\mathcal{M}_{r_a} \geq \mathcal{M}_a$, and $F(\mathcal{M}_{r_a})$ is larger than $\epsilon$, we continue to compare the similarity of next group; Otherwise, we break the algorithm and consider current document pair is not similar. Following is an example of \goph comparison algorithm.

\begin{algorithm}
\begin{algorithmic}[1]
\footnotesize
\caption{\bf \goph Comparison}
\label{alg:dpc}

\INPUT  \\
$(\mathcal{D}_i, \mathcal{D}_j)$: the document pair;
$\mathcal{T}$: user preferred similarity threshold, $k'$: bin number of each group; \\
$n$: group number of the document;
$\epsilon$: largest error tolerance;
A small probability $e$; \\

\OUTPUT \\
$\mathcal{O}$: Similar document pairs\\

\STATE $l$=1; $\mathcal{M}_c$=0; $\mathcal{M}_a$=$k'\mathcal{T}$;
\label{alg:dpc init}
\FOR{each group of document pair $(\mathcal{D}_i, \mathcal{D}_j)$}
    \FOR{each bins in $l$-th group}
    \label{alg:cal cur info st}
        \IF{the bin has equal value}
            \STATE $\mathcal{M}_c$++;
        \ENDIF
    \ENDFOR
    \STATE $l$++;
    \label{alg:cal cur info end}
    \STATE $\mathcal{M}_{r_a}=\dfrac{nk'\mathcal{T}-\mathcal{M}_c}{n-l}$;
    \label{alg:cal mra}
    \IF{$\mathcal{M}_{r_a} < \mathcal{M}_a$}
    \label{alg:comp small st}
    \IF{$F(\mathcal{M}_{r_a})>\epsilon$}
            \STATE $Continue$;
        \ELSE
            \STATE $\mathcal{O} \leftarrow (\mathcal{D}_i, \mathcal{D}_j)$ ;
            \STATE $Break$;
        \ENDIF
    \ENDIF
    \label{alg:comp small end}
    \IF{$\mathcal{M}_{r_a} \geq \mathcal{M}_a$}
    \label{alg:comp large st}
        \IF{$F(\mathcal{M}_{r_a})>\epsilon$}
            \STATE $Continue$;
        \ELSE
            \STATE $Break$;
        \ENDIF
    \ENDIF
    \label{alg:comp large end}
\ENDFOR

\end{algorithmic}
\end{algorithm}

\begin{example}
\label{ex:motivation}
Given a document pair $(\mathcal{D}_i, \mathcal{D}_j)$, assume $k'$=100, $n$=10, $\mathcal{T}$=0.6, $\epsilon={10}^{-4}$ and $\mathcal{X} \sim F(\mathcal{T},\mathcal{K})$. Then, the values of $F(\mathcal{X})$ for different $\mathcal{X}$ are shown in Table \ref{tab:prob} and the value of $\mathcal{M}_a$ is 60. Assume there are 65 bins matching in the 1-st
comparison, the value of $\mathcal{M}_{r_a}$ is 59.4. Because  $\mathcal{M}_{r_a} < \mathcal{M}_a$, and $F(\mathcal{M}_{r_a})$ is larger than $\epsilon$, we continue to compare next group. Assume only 5 bins matching in the 2-nd
comparison, the value of $\mathcal{M}_{r_a}$ is 66.25. Because  $\mathcal{M}_{r_a} \geq \mathcal{M}_a$, and $F(\mathcal{M}_{r_a})$ is larger than $\epsilon$, we continue to compare next group. Assume in the 3-rd
comparison, there are 10 bins matching, similarly, the value of $\mathcal{M}_{r_a}$ is 74.28. Because  $\mathcal{M}_{r_a} \geq \mathcal{M}_a$, and $F(\mathcal{M}_{r_a})$ is larger than $\epsilon$, we continue to compare next group. In the 4-th comparison, assume there are 10 bins matching, the value of $\mathcal{M}_{r_a}$ is 85. Obviously, $\mathcal{M}_{r_a} \geq \mathcal{M}_a$, we continue to compare the value of $F(x\geq85)$ and $\epsilon$. Because the probability of $F(x\geq85)$ equals 5.0732E-08, which  is smaller than $\epsilon$, it means the probability that the document pair $(\mathcal{D}_1, \mathcal{D}_2)$ is a similar pair is a small probability event. Thus, we terminal the algorithm and consider current document pair is not similar pair.

\end{example}

\begin{table}
\centering
\caption{Probability distribution of ${x\leq \mathcal{X}}$ and ${x > \mathcal{X}}$ when $k'$=100, $n$=10, $\mathcal{T}$=0.6} \label{tab:prob}

\centering
\begin{tabular}{ccc|ccc}

\hline
$\mathcal{X}$& $F(x < \mathcal{X})$& $F(x \geq \mathcal{X})$ &$\mathcal{X}$& $F(x < \mathcal{X})$& $F(x \geq \mathcal{X})$\\
\hline
5&	3.27948E-33&	1&	55&	0.131090453&	0.868909547\\
10&	1.25639E-26&	1&	60&	0.456705514&	0.543294486\\
15&	2.31928E-21&	1&	65&	0.820530647&	0.179469353\\
20&	5.56419E-17&	1&	70&	0.975217177&	0.024782823\\
25&	2.71442E-13&	1&	75&	0.998810999&	0.001189001\\
30&	3.46422E-10&	1&	80&	0.999983588&	1.64119E-05\\
35&	1.35466E-07&	0.999999865&	85&	0.999999949&	5.0732E-08\\
40&	1.80415E-05&	0.999981959&	90&	1&	2.33876E-11\\
45&	0.000881808&	0.999118192&	95&	1&	7.01609E-16\\
50&	0.016761687&	0.983238313&	100&	1&	6.53319E-23\\

\hline
\end{tabular}

\end{table}

\subsection{\goph for Image Near-duplicate Detection}
\label{sec:image detection}
In this subsection, we describe how we extend the \goph method originally developed
for text near-duplicate detection to image near-duplicate detection. We describe it using visual words to replace visual words in the following sub-section.

Two images are near duplicate if the similarity \textbf{sim($\mathcal{I}_1$,$\mathcal{I}_2$)}
is higher than a given similarity threshold $\mathcal{T}$. The goal is to retrieve all images
in the database that are similar to a query image. The outline of the images are near duplicate detection algorithm is as follows: First a list of visual words are extracted from each image. A visual word is a single number having the property that two images $\mathcal{I}_1$, $\mathcal{I}_2$ have the same value of visual word with probability equal to their similarity \textbf{sim($\mathcal{I}_1$,$\mathcal{I}_2$)}. To efficient compare the visual words, a one permutation $\psi$ is used to evenly divide these visual words into $k$-bins and retrieval smallest possible representations as fingerprints.
Then, we adopt \goph algorithm to compute the similarity between these two fingerprint sets.


\textbf{How does it work?} Consider image $\mathcal{Y}$ = arg min $\psi(\mathcal{I}_i \cup \mathcal{I}_j)$. Since $\psi$ is an one permutation, each fingerprint of $\mathcal{I}_i \cup \mathcal{I}_j$ has the same probability of being the smallest possible fingerprint. Hence, $\mathcal{Y}$ can be constructed from $\mathcal{I}_i \cup \mathcal{I}_j$. If $\mathcal{Y}$ is an fingerprint of both $\mathcal{I}_i$ and $\mathcal{I}_j$, i.e. $\mathcal{Y}$ $\subseteq$ $\mathcal{I}_i \cap \mathcal{I}_j$, then min $\psi(\mathcal{I}_i)$ = $\psi(\mathcal{I}_j)$ = $\psi(\mathcal{Y})$. Otherwise, if $\mathcal{Y}$ $\subseteq$ $\mathcal{I}_i \setminus \mathcal{I}_j$, then $\psi(\mathcal{Y})$ $<$ $\psi(\mathcal{I}_j)$; if $\mathcal{Y}$ $\subseteq$ $\mathcal{I}_j \setminus \mathcal{I}_i$, then $\psi(\mathcal{Y})$ $<$ $\psi(\mathcal{I}_i)$. Thus, for an one permutation $\psi$ it follows

\begin{equation}
\textbf{Pr}(min(\psi(\mathcal{I}_i))= min(\psi(\mathcal{I}_i))) = \dfrac{\mathcal{I}_i \cap \mathcal{I}_j}{\mathcal{I}_i \cup \mathcal{I}_j}
\end{equation}

To enhance the efficiency of comparison, the fingerprints are grouped into $m$-tuples. 
Similar to text comparison, from the first summary to the last summary, we gradually compute the similarity between the corresponding summaries, and estimate whether the remain summaries will meet the small probability event or not. If the remain summaries meets the small probability event, the algorithm terminals; Otherwise, we continue to calculate the fingerprints of the next summary.


\section{Our Strategy: Hierarchical One Permutation Hashing}
\label{sec:hierarchical one hash}
This section presents the Hierarchical One Permutation Hash
(\hoph) for efficient multi-modal near duplicate detection. Section ~\ref{sec:hoph overview} first
explains the basic idea of \hoph. Then, section ~\ref{sec:theoretical analysis}
presents some theoretical analysis of \hoph,
and section ~\ref{sec:hoph image} discuss the image implementation of \hoph.

\subsection{Hierarchical One Permutation Hashing Overview}
\label{sec:hoph overview}

Different with traditional \oph, which evenly divided the whole space $\mathcal{V}$ into $k$ buckets, \hoph scheme is first grouping the original data entries into two groups, namely permutation group and division group, in each iteration. The space of each permutation group is evenly into $k'$ parts, while the space of each division group is further divide into permutation group and division group sequentially, if the sub group size is greater than $k'$. More specifically, assume $a+b=1$, \hoph divide the space into two groups with proportion $a:b$, namely \abhoph. Thus, after first iteration, the
the whole space $\mathcal{V}$ is dividing into two groups $\mathcal{G}_1$ and $\mathcal{G}_2$ with size $\dfrac{a}{a+b}\mathcal{V}$, $\dfrac{b}{a+b}\mathcal{V}$ respectively. Then, we divide the space evenly into $k'$ parts for the first group $\mathcal{G}_1$. For the second group $\mathcal{G}_2$, we check whether the number of data entries of its sub groups are larger than $k'$ after division, if all the number is larger than $k'$, then \hoph continue to divide the second group into two parts with $a:b$; Otherwise, \hoph terminal the division. Figure \ref{fig:hoph_example} shows the example of \oohoph.

\begin{figure}[thb]
\newskip\subfigtoppskip \subfigtopskip = -0.1cm
\centering
\includegraphics[width=.90\linewidth]{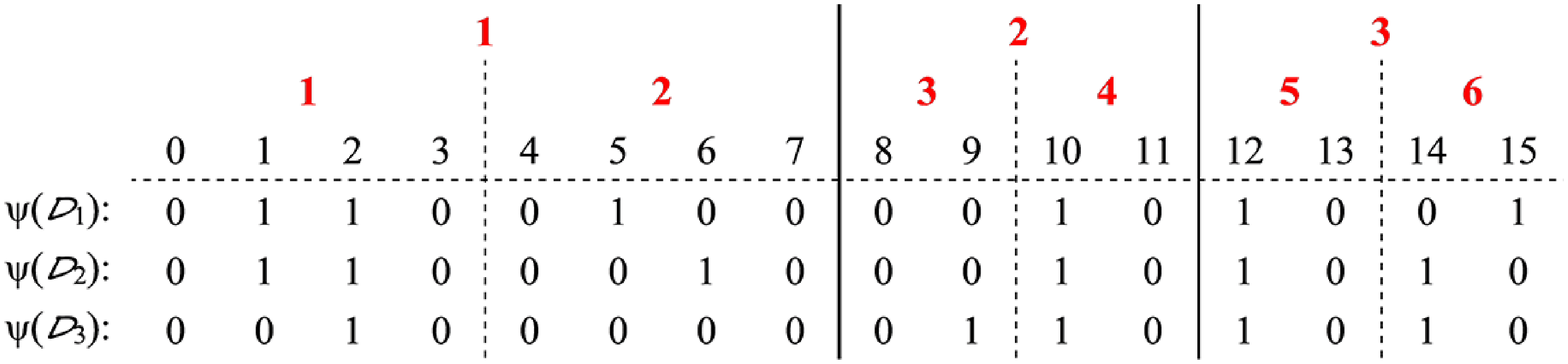}
\vspace{-1mm}
\caption{\small Hierarchical One Permutation Hashing Example }
\label{fig:hoph_example}
\end{figure}

Similar to \oph, a random permutation $\psi$ is generated firstly.
For each document $\mathcal{D}_i$ a one permutation hashing min $\psi(\mathcal{D}_i)$ is recorded. Consider $\mathcal{D}_1$, $\mathcal{D}_2$, $\mathcal{D}_3$ $\subseteq$ $\mathcal{V}$ = \{0, 1, ..., 16\}. Assume $\mathcal{D}_1$ = \{1, 2, 5, 10, 12, 15\}, $\mathcal{D}_2$ = \{1, 2, 6, 10, 12, 14\}, $\mathcal{D}_3$ = \{2, 9, 10, 12, 14\}, and $k'=2$ and $a:b=1:1$. Similar to \oph, a one permutation $\psi$ is generated firstly. Then, we divide the space with \oohoph into three groups. After that, we apply one permutation $\psi$ on the three groups and evenly divide the space into $k'$ buckets for each group, select the smallest nonzero in each bucket as samples. For example, [[1, 5], [$\otimes$, 10], [12, 15]], [[1, 6], [$\otimes$, 10], [12, 14]], and [[2, $\otimes$], [9, 10], [12, 14]]. We use $\otimes$ to denote an empty bin, which occur rarely while the number of nonzeros is large compared to $k'$.



In the example in Figure ~\ref{fig:hoph_example}(which includes 3 documents), the sample selected from $\psi(\mathcal{D}_1)$ is [[1, 5], [$\otimes$, 10], [12, 15]]. We re-index the elements of each bucket to use the smallest possible representations, because only elements with the same bin number need to be compared. For example, for $\psi(\mathcal{D}_1)$, after re-indexing, the sample [[1, 5], [$\otimes$, 10], [12, 15]] becomes [[1, 1], [$\otimes$, 0], [0, 1]]. Similarly, for $\psi(\mathcal{D}_2)$, the original sample [[1, 6], [$\otimes$, 10], [12, 14]] becomes [[1, 2], [$\otimes$, 0], [0, 0]], etc.

From the above example, the sets $\psi(\mathcal{D}_1)$ and $\psi(\mathcal{D}_2)$ have one identical smallest possible representations in each subgroup and the estimated similarity will be 0.625, while the exact similarity is 0.5. The sets $\psi(\mathcal{D}_1)$ and $\psi(\mathcal{D}_3)$ share one smallest possible representation and their similarity estimate is 0.375 (0.375 is exact).

\textbf{Algorithm} Algorithm \ref{alg:abhoph comp} illustrates the implementation details of the \abhoph Comparison for the input document pair. For presentation
simplicity, we assume there are $l$ groups in each document, $r=\dfrac{a}{a+b}$, $\mathcal{P}_a$ represents the average matched probability should be achieve, $\mathcal{P}_r$ represents the excepted average matched probability for the groups haven't been compared, and an array $\mathcal{M}$  is used to store the number of matched bin for each group in Line \ref{alg:abhoph init}. For
each group of document pair $(\mathcal{D}_i, \mathcal{D}_j)$, we gradually calculate their similarity. From Line
\ref{alg:abhoph scan st} to Line \ref{alg:abhoph scan end}, we update the $\mathcal{M}_l$ based on the $l$-th group's information. Then, we compute
the value of $P_r$ in Line \ref{alg:abhoph p_r}, we compare it with $\mathcal{P}_a$. From Line \ref{alg:abhoph sm st} to Line
\ref{alg:abhoph sm end}, if $P_r < P_a$, we further compare $F(P_r*k')$ with $\epsilon$, if its value is larger than $\epsilon$,
we continue to compare the similarity of next group; Otherwise, we consider current
document pair is a similar pair, and break the algorithm. From Line \ref{alg:abhoph la st} to Line \ref{alg:abhoph la end}, If
$P_r \geq P_a$, and $F(P_r*k')$ is larger than $\epsilon$, we continue to compare the similarity of
next group; Otherwise, we break the algorithm and consider current document pair is
not similar. Following is an example of \hoph comparison algorithm.

\begin{example}
\label{ex:motivation}
Given a document pair $(\mathcal{D}_i, \mathcal{D}_j)$, assume $k'$=100, $n$=10, $\mathcal{T}$=0.6, $\epsilon={10}^{-4}$ and $\mathcal{X} \sim F(\mathcal{T},\mathcal{K})$. Then, the values of $F(\mathcal{X})$ for different $\mathcal{X}$ are shown in Table \ref{tab:prob} and the value of $\mathcal{M}_a$ is 60. Assume there are only 40 bins matching in the 1-st
comparison, then the value of $P_r$ is 0.8. Because $P_r \geq P_a$, we further compare the value of $F(P_r*k')$ and $\epsilon$. Because $P_r*k'$ is 80, and the probability of $F(x\geq80)$ equals 1.64119E-05, which is smaller than $\epsilon$. It indicates the event that the document pair $(\mathcal{D}_1, \mathcal{D}_2)$ is a similar pair is a small probability event. Thus, we terminal the algorithm and consider current pair is not similar pair.
\end{example}

In Algorithm \ref{alg:abhoph comp}, line 3-8 computes the number of the bins which have equal value, and the cost is $O(k')$. Line 11-25 searches the similar document pairs by probability $Pr$ and $F(Pr*k')$, the cost is $O(1)$. Assume $F(Pr*k')=\epsilon$, the inverse function of $F$ is denoted as $F^{-1}$, thus $Pr=\frac{F^{-1}(\epsilon)}{k'}$. According to the aforementioned theories, $Pr=\frac{P-\frac{\mathcal{M}_l}{k'}*r^l}{r^l}=\frac{P}{r^l}-\frac{\mathcal{M}_l}{k'}$, we can compute $l=\log_r\frac{P*k'}{F^{-1}(\epsilon)+\mathcal{M}_l}$. Therefore, the total cost of algorithm \ref{alg:abhoph comp} is $O(k'*\log_r\frac{P*k'}{F^{-1}(\epsilon)+\mathcal{M}_l})$.

\begin{algorithm}[h]
\begin{algorithmic}[1]
\footnotesize
\caption{\bf \abhoph Comparison}
\label{alg:abhoph comp}

\INPUT  \\
$(\mathcal{D}_i, \mathcal{D}_j)$: the input document pair;
$\mathcal{T}$: user preferred similarity threshold, $k'$: bin number of each group;
$\mathcal{M}$: array to store the number of matched bin ;
$\epsilon$: error tolerance;
\\

\OUTPUT \\
$\mathcal{O}$: Similar document pairs\\

\STATE $l$=0; $\mathcal{M} \leftarrow \{0\}$; $\mathcal{P}$=$\mathcal{P}_a$=$\mathcal{T}$ $P_r=0$; $r=\dfrac{a}{a+b}$;
\label{alg:abhoph init}
\FOR{each $l$-th group of document pair $(\mathcal{D}_i, \mathcal{D}_j)$}
    \FOR{each bins in $l$-th group}
    \label{alg:abhoph scan st}
    \STATE $\mathcal{M}_{l}$=0;
        \IF{the bin has equal value}
            \STATE $\mathcal{M}_l$++;
        \ENDIF
    \ENDFOR
    \label{alg:abhoph scan end}
    \STATE $l$++;
    \STATE $\mathcal{P}_r=\dfrac{\mathcal{P}-\dfrac{\mathcal{M}_l}{k'}r^l}{r^l}$;
    \STATE $\mathcal{P}=\mathcal{P}_r$;
    \label{alg:abhoph p_r}
    \IF{$\mathcal{P}_r < \mathcal{P}_a$}
    \label{alg:abhoph sm st}
    \IF{$F(\mathcal{P}_r*k')>\epsilon$}
            \STATE $Continue$;
        \ELSE
            \STATE $\mathcal{O} \leftarrow (\mathcal{D}_i, \mathcal{D}_j)$ ;
            \STATE $Break$;
        \ENDIF
    \ENDIF
    \label{alg:abhoph sm end}
    \IF{$\mathcal{P}_r \geq \mathcal{P}_a$}
    \label{alg:abhoph la st}
        \IF{$F(P_r*k')>\epsilon$}
            \STATE $Continue$;
        \ELSE
            \STATE $Break$;
        \ENDIF
    \ENDIF
    \label{alg:abhoph la end}
\ENDFOR

\end{algorithmic}
\end{algorithm}

\subsection{Theoretical Analysis of \hoph}
\label{sec:theoretical analysis}
In the following section, we will introduce some interesting theoretical analysis of \hoph, such as the number of match bin, the number of empty bin, the unbiased estimator
and so on. 



\begin{figure}[thb]
\newskip\subfigtoppskip \subfigtopskip = -0.1cm
\centering
\includegraphics[width=.90\linewidth]{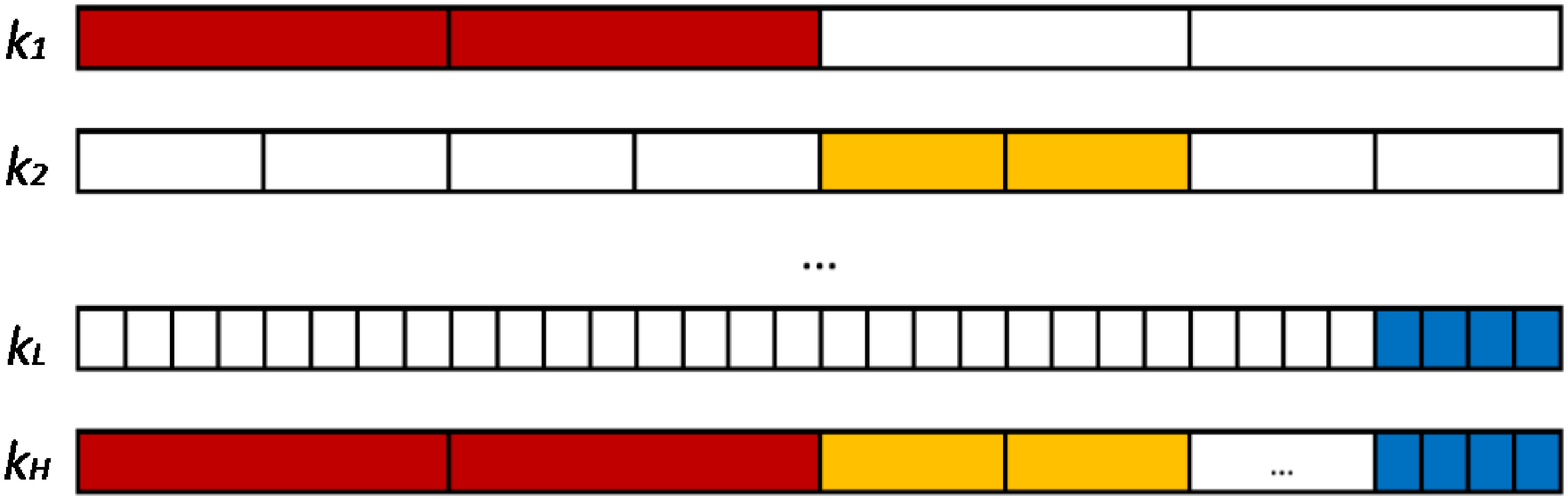}
\vspace{-1mm}
\caption{\small Hierarchical One Permutation Hashing Construction }
\label{fig:hoph proof_example}
\end{figure}

Assume $a+b=1, r=\dfrac{a}{a+b}$,  $(a:b)$\hoph first divides the whole space $\mathcal{V}$ into $l+1$ group with length $r\mathcal{V}$, $r^{2}\mathcal{V} \cdots r^{l}\mathcal{V}$ and $r^{l-1}(1-r)\mathcal{V}$ respectively, then divide the corresponding space evenly into $k'$ bins, as what is shown in Fig \ref{fig:hoph proof_example}. Based on the above assumption, $(a:b)$\hoph has the following properties:

\begin{lemma} \label{lemma:hoph mb}
The number of matched bin of $(a:b)$\hoph is
\begin{equation*}
\label{eqn:hoph mb}
\mathcal{N}_{mb_{h}}=(\sum_{j=1}^{l-1}r^{2j}\mathcal{N}_{mb_{j}}+r^{2l-1}\mathcal{N}_{mb_{l}})(l+1)
\end{equation*}
\end{lemma}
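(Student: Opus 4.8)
The plan is to obtain $\mathcal{N}_{mb_h}$ by adding up, group by group, the matched bins produced inside each of the $l+1$ groups into which the construction in Fig.~\ref{fig:hoph proof_example} partitions $\mathcal{V}$, and then to re-express each group's contribution through the ordinary \oph quantity $\mathcal{N}_{mb_j}$ by tracking how the relative width $r^{j}$ of that group rescales the analysis. First I would fix notation: write $M_j$ for the number of matched bins lying inside the $j$-th group, so that directly from the bin-by-bin definition of a ``matched bin'' we have $\mathcal{N}_{mb_h}=\sum_{j=1}^{l+1}M_j$; the content of the lemma is the evaluation of this sum.

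The engine of the argument is that, because a single permutation $\psi$ is used everywhere, its restriction to the coordinates of the $j$-th group is still uniformly random on those coordinates, so the entire \oph analysis of Section~\ref{sec:one hashing review} applies verbatim inside each group, with the only change being that the group occupies a fraction $r^{j}$ of $\mathcal{V}$ (a fraction $r^{l-1}(1-r)$ for the terminal group, which equals $r^{l}$ once $a=b$) and is cut into $k'$ bins. In particular the identity $\textbf{Pr}(\text{a bin is matched})=\mathcal{R}\cdot\textbf{Pr}(\text{a bin is non-empty})$ holds inside every group, exactly as for \oph. I would then relate $M_j$ to $\mathcal{N}_{mb_j}$: the $j$-th group keeps only a fraction $r^{j}$ of the bins that a full \oph at that bin granularity would own, and only a fraction $r^{j}$ of the union elements $\mathcal{D}_1\cup\mathcal{D}_2$ reach it, and it is the product of these two $r^{j}$-scalings that I expect to produce the quadratic weight $r^{2j}$ in front of $\mathcal{N}_{mb_j}$.

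Finally I would collect terms. Groups $1,\dots,l-1$ contribute $r^{2j}\mathcal{N}_{mb_j}$ apiece; the two terminal groups (indices $l$ and $l+1$) share the same width $r^{l}$ once $a+b=1$ is imposed, hence behave identically, and their combined contribution $2r^{2l}\mathcal{N}_{mb_l}$ collapses to the single term $r^{2l-1}\mathcal{N}_{mb_l}$ (using $2r^{2l}=r^{2l-1}$ when $r=\tfrac12$), which is why the running index stops at $l-1$. Summing over the groups and factoring out the common multiplicity of groups yields the stated factor $(l+1)$, giving $\mathcal{N}_{mb_h}=\bigl(\sum_{j=1}^{l-1}r^{2j}\mathcal{N}_{mb_j}+r^{2l-1}\mathcal{N}_{mb_l}\bigr)(l+1)$.

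The main obstacle I anticipate is bookkeeping rather than any deep idea: one must pin down precisely which normalisation of $\mathcal{N}_{mb_j}$ is intended (the \oph matched count at the granularity and effective element count of group $j$), keep the empty-bin corrections consistent with that choice, fold the terminal group's width $r^{l-1}(1-r)$ in correctly via $a+b=1$, and verify that the factor $(l+1)$ really emerges from the sum, so that the powers of $r$, the index range, and the multiplicity all come out exactly as written. As a sanity check I would instantiate $a=b$, $k'=2$, $l=2$ and compare against the worked example around Fig.~\ref{fig:hoph_example}.
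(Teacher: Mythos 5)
Your overall shape---split the matched bins over the $l+1$ groups, rescale each group's contribution through the corresponding \oph count $\mathcal{N}_{mb_j}$, and merge the two terminal groups---is the same as the paper's, but two of your key steps do not go through as written. First, the factor $(l+1)$. You start from the literal additive identity $\mathcal{N}_{mb_h}=\sum_{j=1}^{l+1}M_j$ and then say the factor $(l+1)$ emerges by ``factoring out the common multiplicity of groups''; it cannot. Under your definition $M_j$ is the number of matched bins among the $k'$ bins of group $j$, so $M_j\leq k'$, whereas matching the lemma term by term would force $M_j=(l+1)r^{2j}\mathcal{N}_{mb_j}$, which can exceed $k'$ (take $r=\tfrac12$, $j=1$, $l=2$, $\mathcal{N}_{mb_1}$ close to $k_1=2k'$). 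The paper avoids this by never decomposing the raw count: it decomposes the per-bin probability $P(\mathcal{B}_{mb,i}=1,\ i\in[1,k_h])=\mathcal{N}_{mb_h}/k_h$ over the groups, weighting group $j$ by its space fraction ($r,r^2,\dots,r^l$ and $r^{l-1}(1-r)$ for the last) against the conditional rate $\mathcal{N}_{mb_j}/k_j$, and only at the end multiplies through by $k_h=(l+1)k'$. There the $(l+1)$ is the total bin count and the second power of $r$ comes from $1/k_j=r^j/k'$; your ``two $r^j$-scalings'' intuition captures the latter but leaves the former unaccounted for, and the two decompositions are not the same identity.

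Second, the terminal group. You assert that the last two groups both have width $r^l\mathcal{V}$ and collapse $2r^{2l}$ into $r^{2l-1}$ ``when $r=\tfrac12$''. The lemma is stated for general $(a{:}b)$: the final group has width $r^{l-1}(1-r)\mathcal{V}$, which equals $r^l\mathcal{V}$ only in the $(1{:}1)$ case. The collapse the paper uses is $r^{l}+r^{l-1}(1-r)=r^{l-1}$, valid for every $r$ with $a+b=1$, and the exponent $2l-1$ then appears after dividing by $k_l=k'/r^l$. As written, your argument proves the formula only for $r=\tfrac12$. Both gaps are repairable, but doing so essentially means replacing your raw-count decomposition by the paper's probability decomposition and redoing the terminal-group algebra for general $r$.
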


\begin{proof}
As shown in Fig \ref{fig:hoph proof_example}, assume $(a:b)$\hoph consists of $l+1$ groups from $l$ different one permutation hashes, which are evenly divided into $k_1, k_2, \cdots, k_l$ respectively.
\begin{equation*}
\begin{split}
P(\mathcal{B}_{mb,i}= 1, i \in [1, &k_{h}])=P(\mathcal{B}_{mb,i}=1,i \in [1,r{k_1}])\\
&+P(\mathcal{B}_{mb,i}=1,i \in [r{k_1}+1,r{k_1}+r^{2}{k_2}])+\cdots\\
&+P(\mathcal{B}_{mb,i}=1,i \in
[\sum_{j=1}^{l-1}r^{j}{k_{j}}+1,\sum_{j=1}^{l}r^{j}{k_{j}}])\\
&+P(\mathcal{B}_{mb,i}=1,i \in
[\sum_{j=1}^{l}r^{j}{k_{j}}]+1,\sum_{j=1}^{l}r^{j}{k_{j}}+r^{l-1}(1-r)k_{l}])\\
&=r\dfrac{\mathcal{N}_{mb_1}}{k_1}+r^{2}\dfrac{\mathcal{N}_{mb_2}}{k_2}+\cdots+r^{l}\dfrac{\mathcal{N}_{mb_l}}{k_l}
+r^{l-1}(1-r)\dfrac{\mathcal{N}_{mb_l}}{k_l}
\end{split}
\end{equation*}
\ \ \ \ Since,
\begin{equation*}
\begin{split}
P(\mathcal{B}_{mb,i}= 1, i \in [1, &k_{h}])=\dfrac{\mathcal{N}_{mb_{h}}}{k_{h}}
\end{split}
\end{equation*}

Then,
\begin{equation*}
\dfrac{\mathcal{N}_{mb_{h}}}{k_{h}}=r\dfrac{\mathcal{N}_{mb_1}}{k_1}+r^{2}\dfrac{\mathcal{N}_{mb_2}}{k_2}+\cdots+r^{l-1}\dfrac{\mathcal{N}_{mb_{l-1}}}{k_{l-1}}
+r^{l-1}\dfrac{\mathcal{N}_{mb_l}}{k_l}
\end{equation*}

Since,
\begin{equation*}
k_{h}=(l+1)k'; k_{j}=\dfrac{k'}{r^j}
\end{equation*}

We obtain,
\begin{equation*}
\begin{split}
\mathcal{N}_{mb_{h}}&=(r^{2}\dfrac{\mathcal{N}_{mb_1}}{k'}+r^{4}\dfrac{\mathcal{N}_{mb_2}}{k'}+\cdots+r^{2(l-1)}\dfrac{\mathcal{N}_{mb_{l-1}}}{k'}
+r^{2l-1}\dfrac{\mathcal{N}_{mb_l}}{k'})(l+1)k'\\
&=(r^{2}\mathcal{N}_{mb_1}+r^{4}\mathcal{N}_{mb_2}+\cdots+r^{2(l-1)}\mathcal{N}_{mb_{l-1}}+r^{2l-1}\mathcal{N}_{mb_{l}})(l+1)\\
&=(\sum_{j=1}^{l-1}r^{2j}\mathcal{N}_{mb_{j}}+r^{2l-1}\mathcal{N}_{mb_{l}})(l+1)
\end{split}
\end{equation*}

thus completing the proof.

\end{proof}

\begin{lemma}
The number of empty bin of $(a:b)$\hoph is
\begin{equation*}
\mathcal{N}_{eb_{h}}=(\sum_{j=1}^{l-1}r^{2j}\mathcal{N}_{eb_{j}}+r^{2l-1}\mathcal{N}_{eb_{l}})(l+1)
\end{equation*}
\end{lemma}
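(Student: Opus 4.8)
The plan is to mirror the proof of Lemma \ref{lemma:hoph mb} essentially verbatim, replacing the event "bin $i$ is a matched bin" by the event "bin $i$ is an empty bin" throughout. The only structural fact about $\mathcal{B}_{mb,i}$ that was used in the previous proof was additivity of probability over the disjoint ranges of bin-indices contributed by each of the $l+1$ constituent one-permutation hashes, together with the identity $P(\mathcal{B}_{mb,i}=1,\, i\in[1,k_h]) = \mathcal{N}_{mb_h}/k_h$ relating the per-bin probability to the aggregate count. Both of these carry over unchanged for $\mathcal{B}_{eb,i}$: the empty-bin indicator is also a $0/1$ variable defined bin-by-bin, the $l+1$ index ranges associated with the groups of lengths $r\mathcal{V}, r^2\mathcal{V}, \dots, r^l\mathcal{V}, r^{l-1}(1-r)\mathcal{V}$ partition $[1,k_h]$ in exactly the same way, and linearity of expectation gives $P(\mathcal{B}_{eb,i}=1,\,i\in[1,k_h]) = \mathcal{N}_{eb_h}/k_h$.

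Concretely, first I would write the decomposition
\[
P(\mathcal{B}_{eb,i}=1,\,i\in[1,k_h]) = r\frac{\mathcal{N}_{eb_1}}{k_1} + r^2\frac{\mathcal{N}_{eb_2}}{k_2} + \cdots + r^l\frac{\mathcal{N}_{eb_l}}{k_l} + r^{l-1}(1-r)\frac{\mathcal{N}_{eb_l}}{k_l},
\]
obtained by summing the empty-bin probability over each of the $l+1$ sub-ranges, where the $j$-th group occupies a fraction $r^j$ (resp.\ $r^{l-1}(1-r)$ for the last) of the permutation and is itself split into $k_j$ bins. Next I would combine the last two terms using $r^l + r^{l-1}(1-r) = r^{l-1}$, substitute $k_h = (l+1)k'$ and $k_j = k'/r^j$, and collect: the $j$-th term becomes $r^{2j}\mathcal{N}_{eb_j}/k'$ for $j\le l-1$ and $r^{2l-1}\mathcal{N}_{eb_l}/k'$ for $j=l$. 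Multiplying through by $k_h = (l+1)k'$ yields
\[
\mathcal{N}_{eb_h} = \Bigl(\sum_{j=1}^{l-1} r^{2j}\mathcal{N}_{eb_j} + r^{2l-1}\mathcal{N}_{eb_l}\Bigr)(l+1),
\]
which is the claimed formula.

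There is no real obstacle here — this lemma is a routine corollary of the preceding one, since the argument is insensitive to which bin-level $0/1$ event is being aggregated. The only point that deserves a word of care is the justification of $P(\mathcal{B}_{eb,i}=1,\,i\in[1,k_h]) = \mathcal{N}_{eb_h}/k_h$, i.e.\ that the expected number of empty bins equals $k_h$ times the per-bin empty probability and that this per-bin probability is the same across bins; this is exactly the symmetry of the single random permutation $\psi$ over the bins, already invoked implicitly in the \oph\ review and in Lemma \ref{lemma:hoph mb}, so I would simply cite that symmetry rather than re-derive it. Everything else is the same bookkeeping with geometric ratios $r^j$ that appeared above.
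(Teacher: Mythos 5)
Your proposal is correct and matches the paper's own proof essentially verbatim: the paper likewise states ``Similar to Lemma~\ref{lemma:hoph mb}'', writes the same decomposition of $P(\mathcal{B}_{eb,i}=1,\,i\in[1,k_h])$ over the $l+1$ sub-ranges, and then substitutes $k_h=(l+1)k'$ and $k_j=k'/r^j$ to collect the $r^{2j}$ coefficients. No further comment is needed.
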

\begin{proof}
Similar to Lemma \ref{lemma:hoph mb}, we can obtain
\begin{equation*}
\begin{split}
P(\mathcal{B}_{eb,i}= 1, i \in [1, &k_{h}])=r\dfrac{\mathcal{N}_{eb_1}}{k_1}+r^{2}\dfrac{\mathcal{N}_{eb_2}}{k_2}+\cdots+r^{l}\dfrac{\mathcal{N}_{eb_l}}{k_l}
+r^{l-1}(1-r)\dfrac{\mathcal{N}_{eb_l}}{k_l}
\end{split}
\end{equation*}

\ \ \ \ Since,
\begin{equation*}
\begin{split}
P(\mathcal{B}_{eb,i}= 1, i \in [1, &k_{h}])=\dfrac{\mathcal{N}_{eb_{h}}}{k_{h}}; k_{h}=(l+1)k'; k_{j}=\dfrac{k'}{r^j}
\end{split}
\end{equation*}

We obtain,
\begin{equation*}
\begin{split}
\mathcal{N}_{eb_{h}}&=(r^{2}\dfrac{\mathcal{N}_{eb_1}}{k'}+r^{4}\dfrac{\mathcal{N}_{eb_2}}{k'}+\cdots+r^{2(l-1)}\dfrac{\mathcal{N}_{eb_{l-1}}}{k'}
+r^{2l-1}\dfrac{\mathcal{N}_{eb_l}}{k'})(l+1)k'\\
&=(\sum_{j=1}^{l-1}r^{2j}\mathcal{N}_{eb_{j}}+r^{2l-1}\mathcal{N}_{eb_{l}})(l+1)
\end{split}
\end{equation*}

thus completing the proof.
\end{proof}

\begin{lemma}
The unbiased estimator of $(a:b)$\hoph is
\begin{equation*}
\hat{\mathcal{R}}_{mb_{h}}=\sum_{j=1}^{l-1}r^{j}\dfrac{\mathcal{N}_{mb_{j}}}{k_{j}-\mathcal{N}_{eb_{j}}}+r^{l-1}\dfrac{\mathcal{N}_{mb_{l}}}{k_{l}-\mathcal{N}_{eb_{l}}}
\end{equation*}
\end{lemma}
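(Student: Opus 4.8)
The plan is to mimic the structure of the proof of Lemma~\ref{lemma:hoph mb}, but now working with the \emph{per-group} unbiased estimators rather than raw match counts, and then invoking linearity of expectation to conclude unbiasedness. Recall from~\eqref{eqn:oph est} that within a single one-permutation hash on group $j$, the quantity $\tfrac{\mathcal{N}_{mb_j}}{k_j-\mathcal{N}_{eb_j}}$ is an unbiased estimator of the Jaccard similarity $\mathcal{R}$, i.e. $E\!\left(\tfrac{\mathcal{N}_{mb_j}}{k_j-\mathcal{N}_{eb_j}}\right)=\mathcal{R}$. The claimed $(a{:}b)$\hoph estimator is built from the $l$ different one-permutation hashes that make up the $l+1$ hierarchical groups (the last hash being shared between the $l$-th and $(l{+}1)$-th groups, whose lengths are $r^{l}\mathcal{V}$ and $r^{l-1}(1-r)\mathcal{V}$, so $r^{l}+r^{l-1}(1-r)=r^{l-1}$).

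First I would set up the convex-combination structure: the hierarchical construction weights the contribution of the $j$-th one-permutation hash by the fraction of the total space it covers, namely $r^{j}$ for $j=1,\dots,l-1$ and $r^{l-1}$ for the last one (combining the $l$-th and $(l{+}1)$-th groups as above), and these weights sum to $\sum_{j=1}^{l-1}r^{j}+r^{l-1}$. Then I would \emph{define}
\begin{equation*}
\hat{\mathcal{R}}_{mb_{h}}=\sum_{j=1}^{l-1}r^{j}\dfrac{\mathcal{N}_{mb_{j}}}{k_{j}-\mathcal{N}_{eb_{j}}}+r^{l-1}\dfrac{\mathcal{N}_{mb_{l}}}{k_{l}-\mathcal{N}_{eb_{l}}}
\end{equation*}
exactly as a space-weighted average of the individual \oph estimators, matching the decomposition of $P(\mathcal{B}_{mb,i}=1)$ and $P(\mathcal{B}_{eb,i}=1)$ over the hierarchical ranges used in the two preceding lemmas. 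The key step is to combine Lemma~\ref{lemma:hoph mb}, the empty-bin lemma, and the substitutions $k_h=(l+1)k'$, $k_j=k'/r^{j}$ to verify that $\mathcal{N}_{mb_h}/(k_h-\mathcal{N}_{eb_h})$ aggregates precisely into the above weighted sum; this is the bookkeeping analogue of how the raw counts were assembled in Lemma~\ref{lemma:hoph mb}.

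Finally I would take expectations term by term. By linearity, $E(\hat{\mathcal{R}}_{mb_{h}})=\sum_{j=1}^{l-1}r^{j}\,E\!\left(\tfrac{\mathcal{N}_{mb_{j}}}{k_{j}-\mathcal{N}_{eb_{j}}}\right)+r^{l-1}E\!\left(\tfrac{\mathcal{N}_{mb_{l}}}{k_{l}-\mathcal{N}_{eb_{l}}}\right)=\left(\sum_{j=1}^{l-1}r^{j}+r^{l-1}\right)\mathcal{R}$, so after normalizing by the total weight $\sum_{j=1}^{l-1}r^{j}+r^{l-1}$ (which, using $a+b=1$ so $r\le 1$, telescopes to a clean closed form) one recovers $\mathcal{R}$ exactly, establishing unbiasedness.

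I expect the main obstacle to be the combinatorial/indexing bookkeeping in the middle step: reconciling the normalization conventions (whether the stated estimator is already the normalized weighted average or still carries the $(l+1)k'$ factors inherited from Lemma~\ref{lemma:hoph mb}), correctly handling the shared last permutation that supplies two of the $l+1$ groups, and making sure the weights $r^{j}$ versus $r^{j}/k_j$ land consistently. The probabilistic content — unbiasedness of each \oph sub-estimator plus linearity of expectation — is routine once that accounting is pinned down; the care lies in matching the geometric weighting of the hierarchical partition to the algebraic form claimed in the lemma.
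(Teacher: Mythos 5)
Your proposal follows essentially the same route as the paper: decompose $\hat{\mathcal{R}}_{mb_{h}}$ as a space-weighted combination of the per-group \oph estimators with weights $r,r^{2},\dots,r^{l}$ and $r^{l-1}(1-r)$ (the last two collapsing onto $\hat{\mathcal{R}}_{mb_{l}}$ with combined weight $r^{l-1}$), then substitute $\hat{\mathcal{R}}_{mb_{j}}=\mathcal{N}_{mb_{j}}/(k_{j}-\mathcal{N}_{eb_{j}})$ from Equation~(\ref{eqn:oph est}). That is exactly the paper's proof, which stops there. Where you go further is the final expectation step, and the obstacle you flag there is real rather than mere bookkeeping: by linearity, $E(\hat{\mathcal{R}}_{mb_{h}})=\bigl(\sum_{j=1}^{l-1}r^{j}+r^{l-1}\bigr)\mathcal{R}$, and this total weight equals $1$ only when $r=\tfrac{1}{2}$ (e.g.\ for $r=\tfrac{1}{3}$ it is $\tfrac{1}{2}+\tfrac{1}{2}r^{l-1}\neq 1$). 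So for general $(a{:}b)$ the stated estimator is biased unless one divides by that total weight, exactly the normalization you propose; the paper never takes an expectation and so never notices this. In short, your proof is correct and matches the paper's decomposition, but your added unbiasedness check exposes a gap in the paper's own argument (the lemma as stated is unbiased only for \oohoph, or after your normalization), so do not expect the "telescoping to $1$" to go through for arbitrary $r$.
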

\begin{proof}
Similar to Lemma \ref{lemma:hoph mb}, we can obtain
\begin{equation*}
\begin{split}
\hat{\mathcal{R}}_{mb_{h}}&=r\hat{\mathcal{R}}_{mb_{1}}+r^{2}\hat{\mathcal{R}}_{mb_{2}}+\cdots
+r^{l}\hat{\mathcal{R}}_{mb_{l}}+r^{l-1}(1-r)\hat{\mathcal{R}}_{mb_{l}}\\
&=\sum_{j=1}^{l-1}r^{j}\hat{\mathcal{R}}_{mb_{j}}+r^{l-1}\hat{\mathcal{R}}_{mb_{l}}
\end{split}
\end{equation*}

Since,
\begin{equation*}
\hat{\mathcal{R}}_{mb_{j}}=\dfrac{\mathcal{N}_{mb_{j}}}{k_{j}-\mathcal{N}_{eb_{j}}}
\end{equation*}

We obtain,
\begin{equation*}
\begin{split}
\hat{\mathcal{R}}_{mb_{h}}=\sum_{j=1}^{l-1}r^{j}\dfrac{\mathcal{N}_{mb_{j}}}{k_{j}-\mathcal{N}_{eb_{j}}}+r^{l-1}\dfrac{\mathcal{N}_{mb_{l}}}{k_{l}-\mathcal{N}_{eb_{l}}}
\end{split}
\end{equation*}

thus completing the proof.
\end{proof}

\subsection{\hoph for Image Near-duplicate Detection}
\label{sec:hoph image}
In term of image near duplicate detection, the construction and comparison method of \hoph is similar to that of \goph. Given a image collection, we first extract the visual word list from each image. In construction stage, we first generate a random permutation $\psi$, then apply \hoph scheme recursively divide the whole visual word space two groups in each iteration. For the front group, we apply  permutation $\psi$ to it, and evenly divide the space into $k'$ buckets. For the latter group, we further divide the space into two groups again, if its sub group size is greater than $k'$. In comparison stage, given two \hoph group, we gradually compute the similarity between the corresponding group from the first to the last, and estimate
whether the remain part will meet the small probability event or not. If the remain part will trigger the small probability event, the algorithm terminals and outputs the result; Otherwise, the fingerprints of the next group will be calculated for further evaluation.


\section{PERFORMANCE EVALUATION}
\label{perform}

In this section, we present results of a comprehensive performance study to evaluate the efficiency and scalability of the proposed techniques in the paper. In our implementation, we evaluate the effectiveness of the following Hashing techniques.

\begin{itemize}
\item \minwise. Minwish hashing, which is a natural implementation of the method in ~\cite{DBLP:journals/jcss/BroderCFM00}.
\item \oph. One permutation hashing, which is a natural implementation of the method in ~\cite{Li2012One}.
\item \goph. The group based one permutation hashing technique proposed in Section ~\ref{sec:GOPH}.
\item \oohoph. The hierarchical one permutation hashing whose ratio of $a$ to $b$ equals 1:1.
\item \othoph. The hierarchical one permutation hashing whose ratio of $a$ to $b$ equals 1:2.
\item \tohoph. The hierarchical one permutation hashing whose ratio of $a$ to $b$ equals 2:1.

\end{itemize}

\textbf{Environment Settings.} Experiments are run on a PC with Intel i7 6700HQ 2.60GHz CPU and 16G memory running Ubuntu 16.04 LTS. All algorithms in the experiments are implemented in Java.

\textbf{Workload.} A workload for this experiment consists of 100 input queries, and the precision, recall and response time are employed to evaluate the performance of the algorithms. By default, we set the error tolerance $e = 10^{-4}$, user preferred similarity threshold $T = 0.7$, data number $V = 60 * 10^4$.

\textbf{Performance matric.} The objective evaluation of the proposed approach is carried out based on precision and recall. Precision measures the accuracy of the retrieval. It is the ratio of retrieved documents that are similar to the query.

Precision is the ratio of retrieved images that are relevant to the query image.

\begin{displaymath}
Precision = \dfrac{number \quad of \quad similar \quad documents \quad retrieved}{Total \quad number \quad of \quad documents \quad retrieved}
\end{displaymath}

Recall measures the robustness of the retrieval. It is defined as the ratio of relevant images in the database that are retrieved in response to a query.

\begin{displaymath}
Recall = \dfrac{number \quad of \quad similar \quad documents \quad retrieved}{Total \quad number \quad of \quad similar \quad documents \quad in \quad dataset}
\end{displaymath}

\subsection{Evaluation on Text Dataset (FS)}
\textbf{Dataset.} Performance of various algorithms are evaluated on real dataset FundSet(FS). FS is obtained from the large-scale document database of NSFC in which each document is a NSFC proposal in PDF format. Taking some documents of funds proposal as the data source.

\textbf{Evaluating training group number.} At first, we try to train the group number of text document $n$. Fig.(a) shows that with $n$ increasing, response time is reduced. But the downtrend slows down and at $n = 10$ the response time is minimum. It indicates that the performance cannot be boosted all along with the gradually increasing of $n$. On the other hand, as shown in Fig.(b), no matter what value $n$ is, the precision is very high, nearly 100\%. Thus, we choose the $n =10$ as the default value of GOPH.

\textbf{Evaluation on different data number.} We investigate the response time, precision and Recall in Fig. against the dataset FS, where other parameters are set to default values. Fig.(a) depicts the accuracy of GOPH, 1:2HOPH, 1:1HOPH, 2:1HOPH, MinHash and OPH. Obviously, MinHash and OPH have the highest precision. When Data Num is larger than $60 * 10^4$, the growth of precision slows down. Fig.(b) demonstrates that the recall of them all are climbing with the increasing of Data Num, and there is little difference between them. Fig.(c) shows that with the increasing of the Data Num, the response time of these 6 algorithms gradually rise. At Data Num $= 20 * 10^4$, all of the values are in the range between 150 and 250. But when Data Num increases to 100, we can see that the performance of 1:1HOPH and 2:1HOPH are much better than MinHash and OPH. Particularly, 2:1HOPH has the smallest response time among the algorithms, because the number of filter segments is the most in 2:1HOPH. On the other hand, 1:1HOPH is most suitable because of the equilibrium of precision, recall and speed. As above evaluation shown, in the aspect of efficiency 1:1HOPH is higher than 1:2HOPH. Besides, the response time of 1:1HOPH and 2:1HOPH are almost the same but the precision of 1:1HOPH is higher than the other. Meanwhile, as the accuracy of \goph is close to \oph and its response time is two or three times faster than \oph, we select GOPH and 1:1HOPH to conduct the following comparison evaluation.

\begin{figure*}
\newskip\subfigtoppskip \subfigtopskip = -0.1cm
\begin{minipage}[b]{0.99\linewidth}
\begin{center}
     \subfigure[{Precision}]{
     \includegraphics[width=0.31\linewidth]{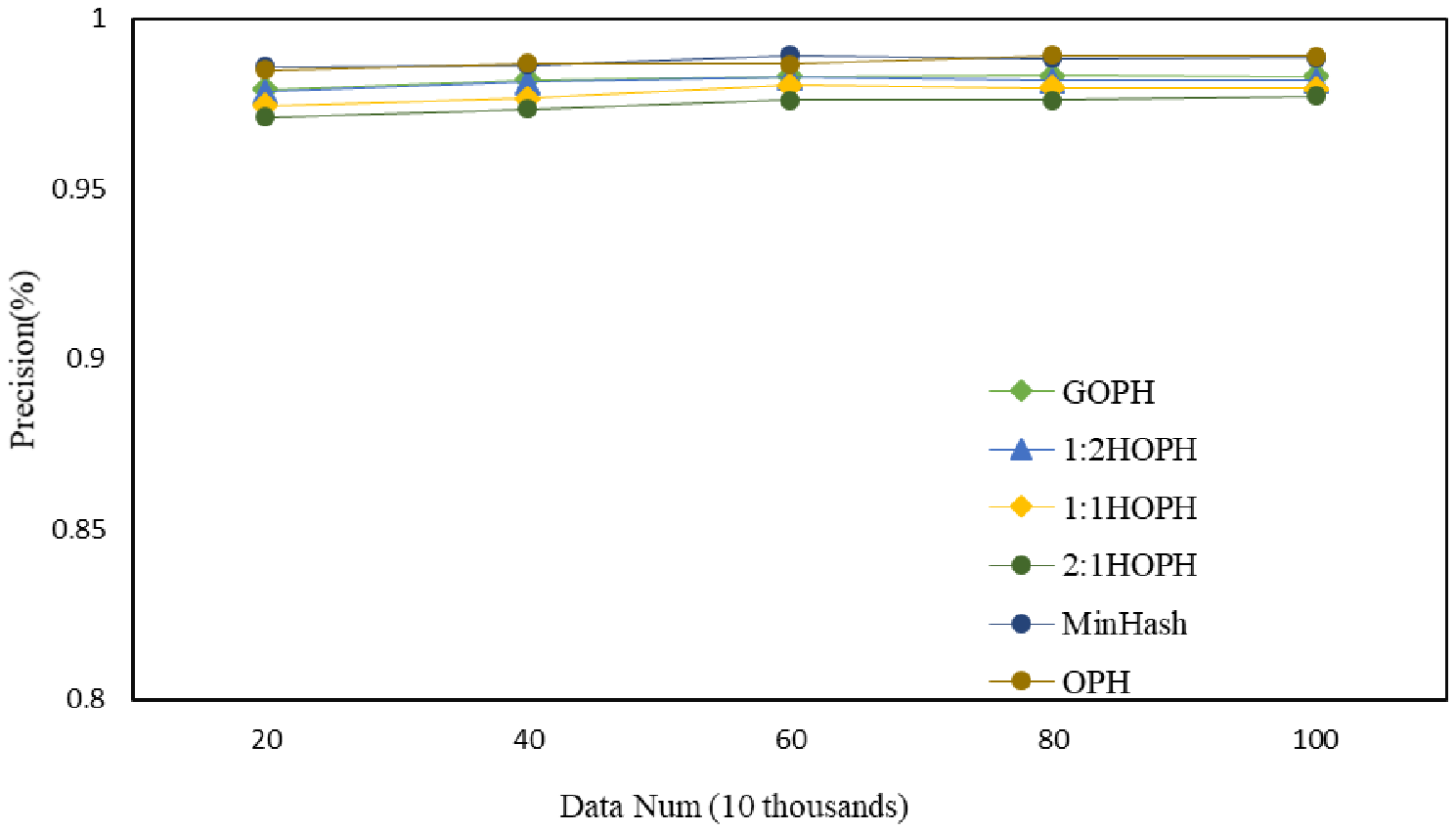}
     }
     \subfigure[{Response time}]{
     \includegraphics[width=0.31\linewidth]{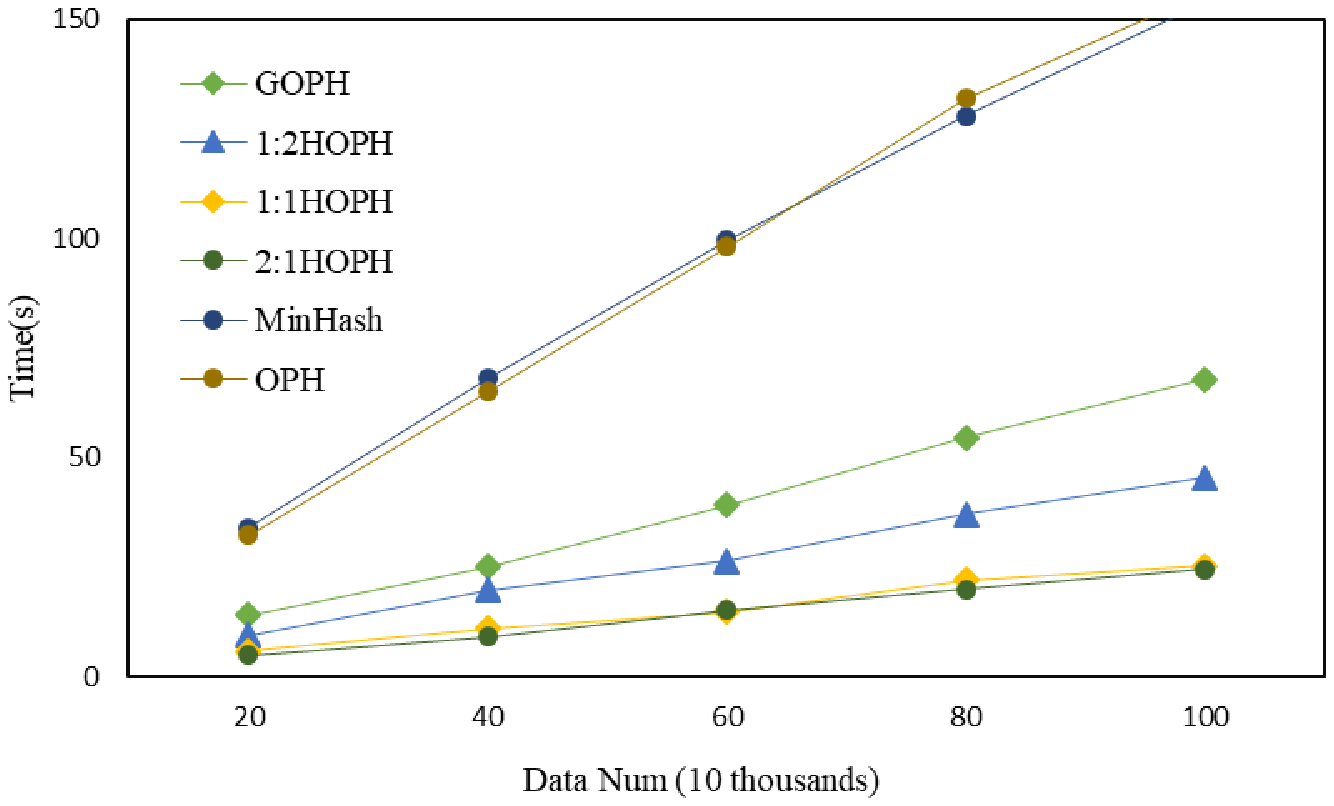}
     }
   \captionsetup{justification=centering}
       \vspace{-0.2cm}
\caption{Effect of different data number on FS}
\label{fig:Fig2}
\end{center}
\end{minipage}
\label{fig:k}
\end{figure*}

\textbf{Evaluation on different threshold.} Fig.(a) depicts that with the rising of the threshold, the precision of MinHash, GOPH and 1:1HOPH slowly increases. All of them are larger than 98\% when the threshold is larger 0.7. Apparently, the precision of MinHash is little higher than the precision of 1:1HOPH. Fig.(b) illustrates that the three recalls stay the same tendency, they are near 100\% when the threshold $T$ is larger than 0.8. As shown in Fig.(c), GOPH, MinHash and 1:1HOPH demonstrate superior performance in comparison with MinHash and the response time of GOPH and 1:1HOPH decline with the Threshold climbing. Obviously, compared with MinHash and GOPH, 1:1HOPH has the smallest response time.

\textbf{Evaluation on Small Probability.} Fig.(a) reports the precision of Minhash, GOPH and 1:1HOPH. Clearly, the precision of MinHash is invariable and the precision of GOPH and 1:1HOPH grow slowly. All of them are very high, MinHash is nearly 99\% and two others are larger than 98\%. Fig.(b) demonstrates the recall of GOPH, MinHash and 1:1HOPH. It is easy to find there is little difference in recall. All of them gradually increase and are nearly 100\% when the error tolerance equals 1.00E-05. In Fig.(c), the performance of 1:1HOPH is nearly 5 times higher than MinHash. When we change the error tolerance to a smaller value, the climbing of the performance of GOPH and 1:1HOPH is not obvious.

\subsection{Evaluation on Image Dataset (IS)}
\textbf{Dataset.} Our empirical studies aim to evaluate the performance of the filter against a subset of ImageNet datast. 
ImageNet is the largest image dataset for image processing and computer vision. It is organized according to the WordNet hierarchy (currently only the nouns), in which each node of the hierarchy is depicted by hundreds and thousands of images. This dataset includes: (1)14,197,122 images; (2)1,034,908 images with bounding box annotations; (3)1000 synsets with SIFT features; (4)1.2 million images with SIFT features. We generate a image dataset (IS) by selecting 1 million images from ImageNet. On IS, we evaluate the precision and efficiency of GOPH, 1:2HOPH, 1:1HOPH, 2:1HOPH, MinHash and OPH.

\textbf{Evaluating training group number.} Firstly, we try to train the group number $n$. It is illustrated by Fig.(a) that with the raising of $n$, response time is going down gradually. But the downtrend slows down and at $n = 10$ the response time is minimum. The performance cannot be boosted all along with $n$ gradually increasing. On the other hand, as shown in Fig.(b), with the raising of Data Num, the precision increases step by step with fluctuations, and at $n = 10$, it is nearly 99.4\%. Hence we choose the $n =10$ as the default value of GOPH.

\textbf{Evaluation on different data number.} We evaluate the precision, recall and response time of these 6 algorithms against IS. The precision is shown in Fig.(a). apparently, all the precisions fluctuate in the range of 97.1\% to 98.7\% with the increasing of Data Num and the precision of MinHash and OPH are higher than others. It is obvious that in Fig.(b) there is litter difference in recall over these 6 algorithms and all of them are approximate 100\% with the number of data increasing. Fig.(c) demonstrates the trends of response time of MinHash and OPH are almost the same, much higher than the others. Particularly, 2:1HOPH and 1:1HOPH significantly outperform the other 4 algorithms in performance. It is clear that 1:1HOPH dominates to 2:1HOPH in the aspect of precision but the efficiency of the former is not lower than the the latter. On the other hand, the response time of 1:2HOPH is higher than 1:1HOPH. Furthermore, the efficiency of GOPH is much higher than OPH. Therefore, in the evaluation on different threshold and small PR, we compare the two algorithms mentioned-above and MinHash.

\textbf{Evaluation on different threshold.} The precision of GOPH, MinHash and 1:1HOPH on difference threshold are shown in Fig.(a). All of the precisions fluctuate in the interval of 97.9\% to 98.7\% The precision of MinHash is little higher than 1:1HOPH and GOPH. As shown in Fig.(b), the recall of these algorithms stay the same tendency. They ascend gradually when the Threshold increases from 0.5 to 0.7. Fig.(c) tells us that the response time of GOPH and 1:1HOPH decline step by step but the performance of MinHash remains unchanged. As expected, 1:1HOPH has the best performance among them. When the Threshold is smaller than 0.8, the response time of 1:1HOPH is less than $13s$.

\textbf{Evaluation on Small Probability.} In Fig.(a), we evaluate the precision of GOPH, MinHash and 1:1HOPH. It is no doubt that the precision of MinHash stay a very high value, a little higher than the others which slowly raise with the error tolerance increasing from 1.00E-3 to 1.00E-4. After that they are almost invariable.  On the other hand, as shown in Fig.(b) the recall of these algorithms go up moderately and at error tolerance is 1.00E-5 they are nearly 100\%. We can see from Fig.(c) that, with the changing of error tolerance, the performance of MinHash remains the same but the others changed very smoothly. Like the situation on dataset FS, the performance of 1:1HOPH is much better than two others.

\section{Conclusion}
\label{con}
The problem of multimedia near duplicate detection is important
due to the increasing amount of multimedia data collected
in a wide spectrum of applications. In the paper, we propose introduce \oph to reduce the costly preprocessing time. Based on \oph, we propose \goph to accelerate the comparison speed. Then, we design a novel hashing method namely \hoph to further improve the performance. Both \goph and \hoph can easily extend to image near duplicate detection.
Finally, our comprehensive experiments
convincingly demonstrate the efficiency of our techniques.

\textbf{Acknowledgments:} This work was supported in part by the National Natural Science Foundation of China
(61379110, 61472450, 61702560), the Key Research Program of Hunan Province(2016JC2018), and project (2016JC2011, 2018JJ3691) of Science and Technology Plan of Hunan Province.

\bibliographystyle{spmpsci}      
\bibliography{ref}

\end{document}